\newtheorem{definition}{Definition}
\newtheorem{theorem}{Theorem}
\newtheorem{assumption}{Assumption}
\def\BibTeX{{\rm B\kern-.05em{\sc i\kern-.025em b}\kern-.08em
    T\kern-.1667em\lower.7ex\hbox{E}\kern-.125emX}}
\begin{document}
\makeatletter
\newcommand{\linebreakand}{%
  \end{@IEEEauthorhalign}
  \hfill\mbox{}\par
  \mbox{}\hfill\begin{@IEEEauthorhalign}
}
\makeatother

\title{
Efficient Federated Unlearning with Adaptive Differential Privacy Preservation
}
\author{\IEEEauthorblockN{Yu Jiang}
\IEEEauthorblockA{\textit{Nanyang Technological University} \\
Singapore \\
yu012@e.ntu.edu.sg} \\
\and
\IEEEauthorblockN{Xindi Tong}
\IEEEauthorblockA{\textit{Nanyang Technological University} \\
Singapore \\
to0001di@e.ntu.edu.sg} \\
\and
\IEEEauthorblockN{Ziyao Liu}
\IEEEauthorblockA{\textit{Nanyang Technological University} \\
Singapore \\
liuziyao@ntu.edu.sg} \\
\and
\IEEEauthorblockN{Huanyi Ye}
\IEEEauthorblockA{\textit{Nanyang Technological University} \\
Singapore \\
huanyi001@e.ntu.edu.sg} 
\and
\IEEEauthorblockN{Chee Wei Tan}
\IEEEauthorblockA{\textit{Nanyang Technological University} \\
Singapore\\
cheewei.tan@ntu.edu.sg} \\
\and
\IEEEauthorblockN{Kwok-Yan Lam}
\IEEEauthorblockA{\textit{Nanyang Technological University} \\
Singapore\\
kwokyan.lam@ntu.edu.sg}
}

\maketitle

\begin{abstract}
Federated unlearning (FU) offers a promising solution to effectively address the need to erase the impact of specific clients' data on the global model in federated learning (FL), thereby granting individuals the ``Right to be Forgotten". The most straightforward approach to achieve unlearning is to train the model from scratch, excluding clients who request data removal, but it is resource-intensive. 
Current state-of-the-art FU methods extend traditional FL frameworks by leveraging stored historical updates, enabling more efficient unlearning than training from scratch. However, the use of stored updates introduces significant privacy risks. Adversaries with access to these updates can potentially reconstruct clients' local data, a well-known vulnerability in the privacy domain. While privacy-enhanced techniques exist, their applications to FU scenarios that balance unlearning efficiency with privacy protection remain underexplored. To address this gap, we propose FedADP, a method designed to achieve both efficiency and privacy preservation in FU. Our approach incorporates an adaptive differential privacy (DP) mechanism, carefully balancing privacy and unlearning performance through a novel budget allocation strategy tailored for FU. FedADP also employs a dual-layered selection process, focusing on global models with significant changes and client updates closely aligned with the global model, reducing storage and communication costs. Additionally, a novel calibration method is introduced to facilitate effective unlearning. Extensive experimental results demonstrate that FedADP effectively manages the trade-off between unlearning efficiency and privacy protection.

\end{abstract}

\begin{IEEEkeywords}
Federated unlearning, differential privacy, data removal
\end{IEEEkeywords}

\section{Introduction}

Federated learning (FL) has been established as a critical paradigm in the field of distributed machine learning, facilitating collaborative model training across multiple participants without sharing their private data \cite{mcmahan2017federated, boyd2011distributed, liu2023long, liu2022efficient, liu2024dynamic}. This approach effectively addresses the requirement for data privacy. As attention to personal data sovereignty intensifies, regulations such as the General Data Protection Regulation (GDPR) \cite{regulation2018general} and the California Consumer Privacy Act (CCPA) \cite{goldman2020introduction} have been introduced, granting individuals the ``Right to be Forgotten" and mandating the removal of personal data upon request. These regulations impose new compliance demands, creating an urgent need for mechanisms capable of effectively removing the influence of specific data from trained models. This need has catalyzed the development of federated unlearning (FU) \cite{liu2024survey, jiang2024towards, cao2015towards, bourtoule2021machine}, an emerging area of research dedicated to facilitating data removal within federated learning frameworks.

The most straightforward way to achieve unlearning is to retrain the model from scratch, excluding clients who request data removal; however, this method is resource-intensive. Current state-of-the-art FU methods primarily build upon traditional FL frameworks by incorporating new features or functions, such as FedRecover \cite{cao2023fedrecover}, in which the server retains historical updates or models from clients for future unlearning. While using stored information enhances unlearning efficiency, it incurs high data communication and storage costs. Besides, storing updates or models introduces vulnerabilities, such as membership inference attack \cite{shokri2017membership} and model inversion attack \cite{fredrikson2015model}. These vulnerabilities create opportunities for adversaries to infer and reconstruct personal data, thereby contradicting the fundamental goal of privacy preservation in FL.
While privacy-enhanced techniques exist, their applications to FU scenarios that balance unlearning efficiency with privacy protection remain underexplored.

To address the challenge of balancing efficiency and privacy preservation in federated unlearning, we propose FedADP, which leverages an adaptive differential privacy (DP) mechanism \cite{dwork2006differential,bureau2021disclosure, dwork2014algorithmic} to carefully balance privacy and unlearning performance. 
Specifically, the server applies an exponential equation for adaptive budget allocation, enabling control over noise levels in subsequent training rounds to balance model performance. Based on the allocated privacy budget, clients apply the Gaussian mechanism to protect sensitive information in their updates before sharing them with the server, significantly reducing the risk of data leakage.
Furthermore, FedADP incorporates a dual-layered selection strategy for historical information, focusing on global models with substantial changes and prioritizing updates closely aligned with the global model, which significantly reduces communication and storage costs.

After target clients submit unlearning requests, the server and the remaining clients initiate the unlearning process following the standard federated learning workflow. During unlearning, we introduce a novel calibration mechanism that facilitates unlearning by calibrating the magnitude and direction of model updates from historical data with new model updates. The comprehensive FedADP framework is shown in \cref{fig:Workflow}. We provide a theoretical analysis of how privacy affects convergence and establish an upper bound on convergence while ensuring DP for all clients. Our experiments, conducted across various datasets and attack scenarios, demonstrate FedADP's effectiveness in maintaining high accuracy and strong resistance to attacks. We also evaluate its performance with different aggregation methods and examine how selection strategies and privacy budgets influence unlearning. FedADP has been proven to effectively manage the trade-off between unlearning efficiency and privacy protection.

\begin{figure*}[t]
    \centering
    \includegraphics[width=\linewidth]{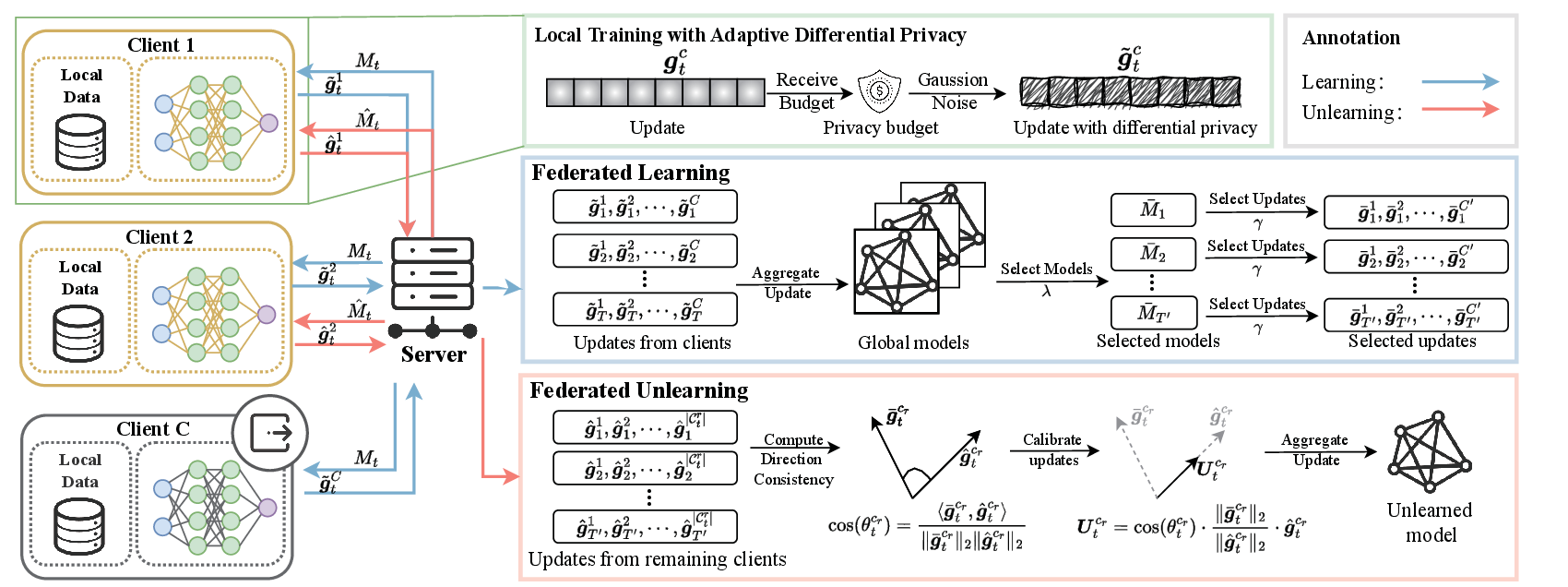}
    \caption{The workflow of FedADP from FL to FU. In FL, clients train local models by adaptively applying differential privacy while the server adopts model and update selection. In FU, the server calibrates historical information for unlearning.}
    \label{fig:Workflow}
\end{figure*}

\textbf{Our Contributions.}
The main contributions of this work are outlined below.
\begin{enumerate}
\item We propose a privacy-preserving approach tailored for FU, in which the server guides clients to adaptively apply DP through an exponential equation for budget allocation, enabling clients to add Gaussian noise based on the allocated privacy budget.

\item We introduce an efficient federated unlearning method that selectively stores historical global models and critical updates, significantly reducing communication and storage costs. We also design a novel calibration mechanism, facilitating the effective unlearning.

\item Through extensive experimentation on diverse datasets, attack scenarios, and aggregation methods, we validate FedADP's performance, demonstrating its capability to balance unlearning effectiveness with privacy protection.
\end{enumerate}



\section{Related Work}\label{sec:relatedwork}
\subsection{Federated Unlearning}
FU aims to eliminate the influence or specific data of target clients requesting data removal from a global model. Some approaches achieve highly effective unlearning using historical information, such as FedRecover \cite{cao2023fedrecover}. Despite their effectiveness, these methods have been critiqued for the potential risk of information leakage and the significant costs associated with data storage and communication. 
Although studies such as SFU \cite{li2023subspace}, IFU \cite{fraboni2022sequential}, and FedRecovery \cite{zhang2023fedrecovery} have considered privacy issues, they significantly increase resource consumption or exhibit unstable unlearning performance.
On the one hand, SFU requires the remaining clients to create representation matrices to send to the server, adding randomness to the process. The server then combines these matrices using Singular Value Decomposition (SVD) to update the global model, which introduces substantial computational overhead. On the other hand, IFU and FedRecovery both employ historical data by adding perturbations. IFU applies disturbances to a client's contributions, while FedRecovery adds Gaussian noise to a weighted sum of gradient residuals. However, accurately calculating a client's contributions or the weighted sum of gradient residuals is crucial; otherwise, inaccuracies detrimentally affect model utility.
In contrast, FedADP balances privacy protection with computational and storage efficiency, without compromising the performance of the global model.

\subsection{Aggregation Methods}
Different FL methods essentially use different aggregation rules.
FedAvg~\cite{mcmahan2017communication} is a commonly employed aggregation method in federated learning where the global model is updated by weighted averaging the model updates from clients. Each client trains locally using its data and computes updates, which the server aggregates using a weighted average based on the number of data samples at each client.
Trimmed Mean~\cite{xie2018generalized} is a robust aggregation technique that reduces the influence of outliers by removing a certain percentage of the highest and lowest values from the list of client updates before computing the mean. This approach helps mitigate the impact of clients with anomalous or corrupted updates, thereby enhancing the robustness of the global model.
Median~\cite{yin2018byzantine} is another robust aggregation method that selects the middle value from the sorted list of client updates. By choosing Median, a more stable aggregation is achieved, especially when dealing with noisy or adversarial client updates, as it is less sensitive to outliers compared to the mean.

\section{Preliminaries}\label{sec:preliminaries}
\subsection{Federated Learning}
In FL, the architecture comprises two primary entities: a group of $C$ clients and a central server. The server coordinates the training of a global model, $M$, using datasets distributed across these clients, each denoted by $D_c$ from client $c \in \mathcal{C}$, forming the complete dataset $D$. The objective is to minimize a loss function, $\min_M \mathcal{L}(D;M)= \min_M \sum_{c=1}^C \mathcal{L}(D_c;M)$, where $\mathcal{L}$ represents the empirical loss function, such as cross-entropy loss. For simplicity, we denote $\mathcal{L}_c(M)=\mathcal{L}(D_c;M)$.
The overall process involves iterative training rounds. 
Specifically, at the \( t \)-th round, the server broadcasts the current global model \( M_t \) to the clients, who then independently train local models on their datasets to compute updates \( \boldsymbol{g}_t^c = \frac{\partial \mathcal{L}(D_c; M_t)}{\partial M_t} \) using stochastic gradient descent. These updates are sent back to the server. The server aggregates them according to a specific rule \( \mathcal{A} \) and updates the global model as \( M_{t+1} = M_t - \eta \mathcal{A}(\boldsymbol{g}_t^1, \boldsymbol{g}_t^2, \dots, \boldsymbol{g}_t^C) \), where \( \eta \) is the learning rate. The server then distributes the updated global model to clients for the next round of training.
This cycle repeats until a predefined convergence criterion is met or the training goal is achieved, culminating in the final model. This process highlights the collaborative nature of FL, which aims to enhance model learning while maintaining the privacy of clients' datasets.


\subsection{Differential Privacy}
Differential privacy is a framework that adds controlled noise to data, ensuring individual privacy while allowing statistical analysis, by quantifying and limiting the privacy risk associated with data release or processing.
\begin{definition}[($\epsilon$, $\delta$)-differential privacy \cite{dwork2006differential}]
For any two datasets $ D $ and $ D' $ that differ in at most one element, and for all subsets of outputs $ S $, a mechanism $ \mathcal{M} $ satisfies $(\epsilon, \delta)$-differential privacy if
\begin{equation}
    \Pr[\mathcal{M}(D) \in S] \leq e^\epsilon \cdot \Pr[\mathcal{M}(D') \in S] + \delta,
\end{equation}
where $\mathcal{M}$ is a randomized mechanism, and the parameters $ \epsilon $ and $ \delta $ quantify the privacy guarantee, with $\epsilon$ controlling the privacy loss and $\delta$ representing the probability of exceeding this privacy loss.
\end{definition}

The concept of sensitivity plays a critical role in differential privacy. It quantifies the potential impact of a single individual’s data on the output of a function, thereby guiding the calibration of the noise needed to achieve privacy. Formally, the sensitivity $\Delta f$ of a function $f: D \rightarrow \mathbb{R}^n$, operating on a dataset $D$, is defined as
\begin{equation}\label{eq:sensitivity}
    \Delta f  = \max_{D, D'} \| f(D) - f(D') \|,
\end{equation}
where the maximum is taken over all pairs of datasets $D$ and $D'$ that differ by exactly one element.

The Gaussian noise mechanism, a common approach in differential privacy, is described by the equation
\begin{equation}
    \mathcal{M}(D) \triangleq f(D) + \mathcal{N}(0, \sigma^2 I),
\end{equation}
where $\mathcal{N}(0, \sigma^2 I)$ denotes the Gaussian noise with mean 0 and covariance matrix $\sigma^2 I$, and $I$ is the identity matrix. The standard deviation $\sigma$ of the Gaussian noise is determined based on the sensitivity $\Delta f$ and the privacy parameters $\epsilon$ and $\delta$, ensuring that the mechanism adheres to the $(\epsilon, \delta)$-differential privacy criteria.


\subsection{Threat Model}
In this work, we assume that information transferred over the network is securely protected during transmission. The adversary could either be an ``honest-but-curious" central server or a client within the system, both assumed to have access to stored model updates. The adversary’s primary goal is to exploit these model updates to recover sensitive training data. This can be achieved through model inversion attacks, where the adversary attempts to reconstruct individual data points, or through membership inference attacks, where the adversary infers whether a specific sample was part of the training dataset.

\section{Design of FedADP}\label{sec:design}
\subsection{Overview}

Our proposed FedADP scheme enhances privacy preservation and unlearning efficiency through two main strategies: (i) applying adaptive differential privacy with a dual-layered selection of critical historical data in FL, and (ii) calibrating historical information in FU. The FedADP methodology is illustrated in \cref{fig:Workflow}. This section provides a detailed explanation of these strategies as part of the comprehensive overview of the FedADP scheme.

\subsection{Design of Adaptive DP and Dual-layered Selection in FL}
\subsubsection{Adaptive Differential Privacy} \label{sec:dp}

Based on our observations, it is necessary to store clients' updates to facilitate unlearning. However, due to the potential risks of information leakage, certain strategies are required to safeguard the privacy of clients.
Differential privacy \cite{wei2020federated, noble2022differentially, hu2020personalized}, a prevalent method in federated environments, quantifies and controls privacy loss through mathematical frameworks. DP provides a mechanism to protect individual data privacy by setting an upper bound on the information that can be inferred about any specific entry. Clients add controlled noise to their gradients, ensuring that updates used for updating the global model do not reveal individual contributions, thereby protecting sensitive client information. While established methods exist for applying DP in federated learning, effective approaches for adapting DP to federated unlearning remain underexplored.

To address this gap, we adopt an adaptive differential privacy strategy tailored for unlearning. Specifically, the server instructs the clients to add noise according to a dynamic privacy budget. The budget allocation for each iteration in our training process follows an exponential equation that utilizes the mathematical constant $e$ and the change in training loss between two consecutive rounds. This equation is designed to adjust the budget based on the observed performance improvements.
The budget for the next round, denoted as $\epsilon_{t+1}$, is calculated using the following formula:
\begin{equation}
    \epsilon_{t+1} = \min(\max(\epsilon_t \cdot e^{\Delta_\text{loss}}, \epsilon_{\min}), \epsilon_{\max}),
\end{equation}
where $\epsilon_t$ represents the budget allocated in the current round, and $\Delta_\text{loss} =| \text{loss}_{t-1} - \text{loss}_{t} |$ is the difference in training loss between the previous round and current round. 
This formula allocates an increased budget to address significant variations in training loss, thereby supporting a stable training process. The exponential function $e^{\Delta_\text{loss}}$ scales the budget to accommodate substantial shifts in training loss, promoting more precise and stable model tuning.


The use of the exponential equation enables an adaptive approach to budget allocation, where reduced noise levels in subsequent training rounds lead to enhanced model performance. This adjustment aims to balance the need for computational resources with the goal of optimizing the training process. To prevent excessive budget fluctuations, the formula includes boundary constraints, $\epsilon_{\min}$ and $\epsilon_{\max}$. These boundaries ensure that the budget does not fall below a minimum limit or exceed a maximum threshold, thereby maintaining stability and control over the total budget allocated in each training round. This approach aids in managing resources effectively while progressively refining the model's performance.

Then, we employ the Gaussian mechanism to protect sensitive information in the clients' updates. Given the variability in update sizes across clients, norm clipping is essential, which constrains each client's update to a predefined threshold $ S $, i.e.,
$
\dot{\boldsymbol{g}}_t^c = {{\boldsymbol{g}}_t^c} / {\max\left(1, \frac{\|{\boldsymbol{g}}_t^c\|_2}{S}\right)},
$
where $ {\boldsymbol{g}}_t^c $ is the original update from client $ c $ at round $ t $, and $ \dot{\boldsymbol{g}}_t^c $ is the clipped update. To incorporate DP, noise is added to the clipped update, with the noise drawn from a Gaussian distribution with mean zero and variance $ \sigma_t^2 S^2 $, resulting in the noised update:
\begin{equation}
\tilde{\boldsymbol{g}}_t^c = \dot{\boldsymbol{g}}_t^c + \mathcal{N}(0, \sigma_t^2 S^2).
\end{equation}
The standard deviation $ \sigma_t $ is calculated based on the privacy parameters $ \epsilon_t $ and $ \delta $, and the sensitivity dictated by the clipping threshold $ S $, using the formula:
\begin{equation}
\sigma_t = \frac{S \cdot \sqrt{2 \ln(1.25/\delta)}}{\epsilon_t}.
\end{equation}
This ensures that the added Gaussian noise is sufficient to meet the ($ \epsilon $, $ \delta $)-differential privacy criteria. By scaling $ \sigma_t $ with $ S $, the noise magnitude is appropriately adjusted to the maximum allowed update size, thereby maintaining the utility of the aggregated updates while safeguarding clients' privacy. 

\subsubsection{Dual-layered Selection}
In traditional FL, clients send full model updates, leading to high communication costs and memory challenges for servers. To address this, we make selections at two levels: (i) global models with significant changes, and (ii) updates that are more aligned with the models.

\textbf{Global model selection.}
It is reasonable for the server to prioritize storing global models with higher convergence rates, as the convergence speed of the global model varies across different rounds. A high convergence rate indicates significant changes in the global model, making these models essential for simplifying future unlearning processes by retaining only the most impactful changes, thereby reducing unnecessary storage.
Cosine similarity is a powerful tool for measuring the angular similarity between two vector representations, which is valuable in machine learning for comparing outputs of different models \cite{sattler2020byzantine, wang2021edge,cao2021fltrust}. By assessing the cosine similarity between successive model updates, we quantify how closely these models align. When combined with the ReLU function, which selectively retains the positive components of the similarity measure while discarding the negative ones, this method becomes particularly effective at highlighting significant updates between models.
Consequently, a lower cosine similarity after ReLU indicates a greater divergence between models, helping to identify which models should be prioritized for retention.

To implement selective storage using cosine similarity combined with the ReLU function, we instruct the server to evaluate the rate of alignment or changes in the stored models. Let $ M_t $ represent the global model at the $ t $-th round. The server then computes the cosine similarity between the current model $ M_{t} $ and its predecessor $ M_{t-1} $ as follows:
\begin{equation}
   \text{cos}(M_{t}, M_{t-1}) = \frac{\langle M_{t}, M_{t-1} \rangle}{\|M_{t}\|_2 \|M_{t-1}\|_2},
\end{equation}
where $\left\langle  , \right\rangle$ is the scalar product operator and $\Vert \cdot \Vert$ represents the $\ell_2$ norm of a vector.
The degree of alignment between the two successive models is refined by applying the ReLU function to the cosine similarity, emphasizing positive changes:
\begin{equation}
    d(M_{t}, M_{t-1})=\text{ReLU}(\text{cos}(M_{t}, M_{t-1})).
\end{equation}
Subsequently, the server selects $ \lambda $ (percentage) of the models with the least alignment $d(M_{t}, M_{t-1})$, prioritizing those with the most substantial changes to optimize storage and facilitate efficient unlearning. This selection method does not require waiting for the completion of training; instead, it retains global models at each stage when the loss decreases to $ \beta $ (percentage) of the previous stage’s value, allowing for timely model evaluation and storage optimization based on performance improvements.
Therefore, when an unlearning request is issued, the final set of stored models is denoted as $ \{\bar{M}_t\}^{T'} =\{\bar{M}_t|t=1,\dots,T'\}$, where $ T' = \lambda T $ represents the total number of global models ultimately stored, $ \lambda $ is the global model selection ratio, and $ T $ is the total training rounds in the federated learning process. Along with these models, the corresponding updates are denoted by $\{G_t\}^{T'}$, where each $G_t$ consists of the gradients from all $C$ clients at round $t$, i.e., $ G_t = \{{\tilde{\boldsymbol g}}_t^c\}^C =\{{\tilde{\boldsymbol g}}_t^c |  c=1,\dots, C\} $. Then, we let $\boldsymbol{G}_t = \mathcal{A}(\tilde{\boldsymbol g}_t^1, \tilde{\boldsymbol g}_t^2, \dots, \tilde{\boldsymbol g}_t^c )$ denote the aggregated update for the global model at the $t$-th round.



\textbf{Model update selection.} Similarly, to further optimize storage, we select model updates from clients that are more closely aligned with the global model. Updates that are closer to the global model indicate a greater contribution and have a more significant impact on the global models, making them more important. We use cosine similarity to quantify the alignment of the clients' model updates, with a higher cosine similarity indicating closer alignment and a greater contribution.

After selecting the global models, the server utilizes the cosine similarity to compare the alignment between the model update from client with the aggregated update:
\begin{equation}
    s(\tilde{{\boldsymbol g}}_t^c,\boldsymbol{G}_t) = \text{cos}(\tilde{{\boldsymbol g}}_t^c,\boldsymbol{G}_t)=\frac{\langle \tilde{{\boldsymbol g}}_t^c, \boldsymbol{G}_t \rangle}{\|\tilde{{\boldsymbol g}}_t^c\|_2 \|\boldsymbol{G}_t\|_2}. 
\end{equation}
Then, the server selects $ \gamma $ (percentage) of the updates with the highest alignment $s(\tilde{{\boldsymbol g}}_t^c,\boldsymbol{G}_t)$, representing the most important model updates. The final set of selected model updates for each round is therefore given by $\bar G_t = \{\bar{\boldsymbol g}_t^c\}^{C'} = \{ \bar{\boldsymbol g}_t^c | c=1, \dots, C' \}$, where $C'=\gamma C$ and $\gamma$ is the update selection ratio. Meanwhile, the client sets providing these updates are denoted as $\{\mathcal{C}_t\}^{T'}$, who will participate in the subsequent unlearning process. A detailed description of adaptive DP and dual-layered selection in FL within the FedADP scheme is provided in Algorithm \ref{alg:flinFedADP}.

\begin{algorithm}[t]
\SetAlgoNoEnd
\caption{Adaptive Differential Privacy and Dual-layered Selection in FL of FedADP Scheme}
\label{alg:flinFedADP}
\KwIn{Federated learning client $c \in \mathcal{C}$ with dataset $D_c$, initial model $M_0$, global model selection ratio $\lambda$, model update selection ratio $\gamma$, percentage of loss reduction $\beta$, norm clipping threshold $S$, differential privacy parameters $\epsilon$ and $\delta$, maximum training budget for each round $\epsilon_{\max}$, minimum training budget for each round $\epsilon_{\min}$}
\KwOut{Final storage: $\{\bar{M}_t\}^{T'}$, $\{\bar G_t\}^{T'}$}
\SetKwFunction{FL}{}
\SetKwProg{Fn}{}{Server executes:}{\KwRet}
\Fn{\FL}{
Initialize the global model $M_0$; \\
Initialize the privacy budget $\epsilon_0$;\\
Initialize $\text{prev\_loss} \leftarrow \mathcal{L}(M_{0})$;\\
Initialize the temporary storage $\mathcal{T}_1$, $\mathcal{T}_2$;\\
Send global model $M_t$ and budget $\epsilon_t$ to all clients; \\
}
\SetKwProg{Fn}{}{Clients execute:}{\KwRet}
\Fn{\FL}{
\For{all client $c$}{
Compute update: $g_t^c \leftarrow \frac{\partial \mathcal{L}(D_c; M_t)}{\partial M_t }$; \\
Compute standard deviation of the noise: $\sigma_t \leftarrow \frac{S \cdot \sqrt{2 \ln(1.25/\delta)}}{\epsilon_t}$;\\
Add Gaussian noise: $\tilde {\boldsymbol g}_t^c  \leftarrow {{\boldsymbol{g}}_t^c} / {\max\left(1, \frac{\|{\boldsymbol{g}}_t^c\|_2}{S}\right)} + \mathcal{N}(0, \sigma_t^2 S^2)$; \\
}
}
\SetKwProg{Fn}{}{Server execute:}{\KwRet}
\Fn{\FL}{
Aggregate updates: $ \boldsymbol{G}_t  \leftarrow \mathcal{A}(\tilde{\boldsymbol g}_t^1, \tilde{\boldsymbol g}_t^2, \dots, \tilde{\boldsymbol g}_t^c )$; \\
Update model: $M_{t+1} \leftarrow M_t - \eta \boldsymbol{G}_t$; \\
Storage temporarily: $\mathcal{T}_1 \leftarrow \mathcal{T}_1 \cup \{M_{t+1}\}$, $\mathcal{T}_2 \leftarrow \mathcal{T}_2 \cup \{ G_t\}$;\\
\If{$\mathcal{L}(M_{t+1}) \leq (1-\beta) \cdot \text{prev\_loss}$}{
\For{temporary storage $\mathcal{T}_1$}{
Store selected model: $\{\bar{M}_t\}  \leftarrow \text{min\_alignment}(\mathcal{T}_1, \lambda)$;\\}
\For{temporary storage $\mathcal{T}_2$}{
Store selected update: $\{\bar{G}_t\}  \leftarrow \text{max\_alignment}(\mathcal{T}_2, \gamma)$;\\
}
Empty temporary storage $\mathcal{T}_1$, $\mathcal{T}_2$;
}
Update loss difference: $\Delta_{\text{loss}} \leftarrow   | \text{loss}_{t-1} - \text{loss}_{t} | $;\\
Update next round budget: $\epsilon_{t+1} \leftarrow \min(\max(\epsilon_t
\ast e^{\Delta_\text{loss}},\epsilon_{\min}),\epsilon_{\max})$;\\
}
\KwRet{$\{\bar{M}_t\}^{T'}$, $\{\bar G_t\}^{T'}$}
\end{algorithm}

\begin{algorithm}[t]
\SetAlgoNoEnd
\caption{Calibration in FU of FedADP Scheme}
\label{alg:fuinFedADP}
\KwIn{Federated learning client $c \in \mathcal{C}$ with dataset $D_c$, unlearning client $c \in \mathcal{C}_u$, stored models $\{\bar{M}_t\}^{T'}$, stored updates $\{\bar G_t\}^{T'}$, client sets providing stored updates $\{\mathcal{C}_t\}^{T'}$}
\KwOut{Unlearned model $\hat M$}
\SetKwFunction{FU}{}
\SetKwProg{Fn}{}{Server executes:}{\KwRet}
\Fn{\FU}{
Reinitialize the global model $\hat M_0 \leftarrow \bar{M}_0$; \\
Send global model $\hat M_t$ to the remaining clients $c_r \in \mathcal{C}_t^r $ $ ( \mathcal{C}_t^r =
\mathcal{C}_t \setminus \mathcal{C}_u)$; \\}
\SetKwProg{Fn}{}{Remaining clients execute:}{\KwRet}
\Fn{\FU}{
\For{all remaining clients $c_r  \in \mathcal{C}_t^r$}{
Compute update: $\hat g_t^{c_r}$;
}}
\SetKwProg{Fn}{}{Server executes:}{\KwRet}
\Fn{\FU}{
Compute direction consistency: $\cos(\theta_t^{c_r}) \leftarrow \frac{\langle \bar{\boldsymbol{g}}_t^{c_r},\hat{\boldsymbol{g}}_t^{c_r} \rangle}{\|\bar{\boldsymbol{g}}_t^{c_r}\|_2 \|\hat{\boldsymbol{g}}_t^{c_r}\|_2}$;\\
Calibrate update: $\boldsymbol U_t^{c_r} \leftarrow \cos(\theta_t^{c_r}) \cdot \frac{\|\bar{\boldsymbol{g}}_t^{c_r}\|_2}{\|\hat{\boldsymbol{g}}_t^{c_r}\|_2} \cdot \hat{\boldsymbol{g}}_t^{c_r}$;\\
Update model: $\hat{M}_{t+1} \leftarrow \hat{M}_t - \eta \boldsymbol{U}_t$;\\
}
\KwRet{$\hat M$}
\end{algorithm}

\subsection{Design of Calibration in FU}
\subsubsection{Calibration of Historical Information}
After target clients $c \in \mathcal{C}_u$ request data erasure, the server and the remaining clients initiate the unlearning task. Here, $\mathcal{C}_u$ denotes the group of clients requesting data unlearning. It is noted that the initial model for unlearning, denoted as $\hat{M}_0$, is derived from the first stored global model, $\bar{M}_0$.

At the $t$-th unlearning round, the server distributes the unlearned global model $\hat{M}_t$ to the remaining clients $c_r \in \mathcal{C}_t^r$, where $\mathcal{C}_t^r = \mathcal{C}_t \setminus \mathcal{C}_u$ according to the client providing the selected updates.
These clients train the global model as in standard FL training and send their updates, $\hat{\boldsymbol{g}}_t^{c_r}$, to the server.
Next, the server calibrates the historical updates based on the new updates from the remaining clients. Specifically, the server calculates the cosine similarity $ \cos(\theta_t^{c_r}) $ between the historical updates and the current updates to assess the direction consistency:
\begin{equation} 
\cos(\theta_t^{c_r}) = \frac{\langle \bar{\boldsymbol{g}}_t^{c_r},\hat{\boldsymbol{g}}_t^{c_r} \rangle}{\|\bar{\boldsymbol{g}}_t^{c_r}\|_2 \|\hat{\boldsymbol{g}}_t^{c_r}\|_2}.
\end{equation}
Then, the server calibrates the update step based on this direction consistency and computes the calibrated update:
\begin{equation}
\boldsymbol U_t^{c_r} = \cos(\theta_t^{c_r}) \cdot \frac{\|\bar{\boldsymbol{g}}_t^{c_r}\|_2}{\|\hat{\boldsymbol{g}}_t^{c_r}\|_2} \cdot \hat{\boldsymbol{g}}_t^{c_r}.
\end{equation}
Once the server collects these calibrated updates, it aggregates them using an aggregation rule:
\begin{equation}
\begin{aligned}
    \boldsymbol{U}_t &= \mathcal{A}(U_t^{1}, U_t^{2}, \ldots, U_t^{|\mathcal{C}_t^r|})  \\
    &= \frac{1}{|D^{-}|} \sum_{c_r \in \mathcal{C}_t^r} |D_{c_r}| \cdot \boldsymbol U_t^{c_r},
\end{aligned}
\end{equation}
where $D^{-}$ represents the dataset from the remaining clients. The server then updates the global model: $\hat{M}_{t+1} = \hat{M}_t - \eta \boldsymbol{U}_t$.
The server and remaining clients repeat these steps until all historical gradients are calibrated, leading to the final unlearned global model $\hat{M}$. A detailed description of the FedADP scheme is provided in Algorithm \ref{alg:fuinFedADP}.

\section{Theoretical Analysis} \label{sec:theory}
In this section, we first outline the assumptions for our theoretical analysis. Then we demonstrate the convergence upper bound after $T$ rounds. 

\begin{assumption}\label{assumption}
We assume the following conditions for the loss function of all clients:
\begin{enumerate}
    \item $\mathcal{L}_c (M)$ is convex and L-Lipschitz smooth, i.e., for any $M$, $M'$, $\Vert \nabla \mathcal{L}_c (M) - \nabla \mathcal{L}_c (M') \Vert_2 \leq L \Vert M - M' \Vert_2$.
    \item  $\mathcal{L}_c (M)$ satisfies the Polyak-Lojasiewicz condition with the positive parameter $\mu$, such that $\mathcal{L}(M)- \mathcal{L}(M^*) \leq \frac{1}{2 \mu} \Vert \nabla \mathcal{L}(M) \Vert^2_2$, where $M^*$ is the optimal result.
    \item For any $c$ and $M$, $\|\nabla \mathcal{L}_c(M_t) - \nabla \mathcal{L}(M_t)\|_2 \leq \varepsilon_c$ and $\mathbb{E}\{\varepsilon_c\} = \varepsilon$, where $\varepsilon_c$ is the divergence metric.
\end{enumerate}
\end{assumption}



\begin{theorem}\label{theo}
To guarantee $(\epsilon,\delta)$-DP for all clients, the convergence upper bound after $T$ rounds is given by
\begin{equation}
    \begin{aligned}
        &\mathbb{E}\{\mathcal{L}(M_{T})\} - \mathcal{L}(M^*) \leq 
A^T( \mathcal{L}(M_0) - \mathcal{L}(M^*)) \\
&+ L^2 (1 -  A^T)( \frac{S^2 {\ln(1.25/\delta)}}{\mu \epsilon_{\max}^2} + \frac{\eta^2 \varepsilon^2}{2 \mu C}),
    \end{aligned}
\end{equation}
where $A=  1 - 2 \mu \eta + \mu \eta^2 L$.
\end{theorem}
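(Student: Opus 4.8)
The $(\epsilon,\delta)$-DP guarantee is immediate from the Gaussian-mechanism calibration $\sigma_t = S\sqrt{2\ln(1.25/\delta)}/\epsilon_t$ established earlier, so the substance of the theorem is the convergence bound, and the plan is to establish a one-step contraction on the optimality gap $\mathbb{E}\{\mathcal{L}(M_t)\} - \mathcal{L}(M^*)$ and then unroll it as a geometric recursion. First I would invoke the $L$-Lipschitz smoothness from \cref{assumption} to write the descent inequality
\begin{equation}
\mathcal{L}(M_{t+1}) \leq \mathcal{L}(M_t) - \eta \langle \nabla \mathcal{L}(M_t), \boldsymbol{G}_t \rangle + \frac{L \eta^2}{2} \|\boldsymbol{G}_t\|_2^2,
\end{equation}
where $\boldsymbol{G}_t$ is the aggregated, clipped, noised update. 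The next step is to decompose $\boldsymbol{G}_t$ into the true gradient $\nabla \mathcal{L}(M_t)$, a client-heterogeneity component controlled by the divergence assumption, and the zero-mean Gaussian perturbation $\mathcal{N}(0,\sigma_t^2 S^2)$ injected by each client, then take expectation over the noise so the cross term $\langle \nabla \mathcal{L}(M_t), \text{noise}\rangle$ vanishes.

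The core of the argument is bounding the second-moment term $\mathbb{E}\{\|\boldsymbol{G}_t\|_2^2\}$. Using independence of the per-client noise across the $C$ clients, the aggregated noise variance contributes a term proportional to $\sigma_t^2 S^2$, while the third condition of \cref{assumption} together with $\mathbb{E}\{\varepsilon_c\} = \varepsilon$ bounds the heterogeneity contribution by a term in $\varepsilon^2/C$. Substituting $\sigma_t = S\sqrt{2\ln(1.25/\delta)}/\epsilon_t$ and bounding the per-round budget uniformly through the constraint on $\epsilon_t$ converts the noise term into $S^2 \ln(1.25/\delta)/\epsilon_{\max}^2$. I would then apply the Polyak-Lojasiewicz condition to the linear descent term, replacing $\|\nabla \mathcal{L}(M_t)\|_2^2$ by the lower bound $2\mu(\mathcal{L}(M_t) - \mathcal{L}(M^*))$; collecting the first-order and quadratic contributions yields exactly the contraction factor $A = 1 - 2\mu\eta + \mu\eta^2 L$, provided the step size satisfies $\eta < 2/L$ so that $A \in (0,1)$.

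This produces the one-step recursion
\begin{equation}
\mathbb{E}\{\mathcal{L}(M_{t+1})\} - \mathcal{L}(M^*) \leq A\big(\mathbb{E}\{\mathcal{L}(M_t)\} - \mathcal{L}(M^*)\big) + B,
\end{equation}
where $B$ aggregates the DP-noise and heterogeneity terms scaled through $L$ and $\mu$. Unrolling over $T$ rounds and summing the geometric series $\sum_{k=0}^{T-1} A^k = (1-A^T)/(1-A)$ gives the stated bound, with the constant offset $B/(1-A)$ matching $L^2(1-A^T)\big(S^2\ln(1.25/\delta)/(\mu\epsilon_{\max}^2) + \eta^2\varepsilon^2/(2\mu C)\big)$ after simplifying $1-A = \mu\eta(2-\eta L)$.

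The main obstacle I anticipate is the rigorous handling of the norm-clipping operation inside the second-moment bound: the clean decomposition of $\boldsymbol{G}_t$ into true gradient plus heterogeneity plus noise implicitly assumes the clipped updates $\dot{\boldsymbol{g}}_t^c$ faithfully represent $\nabla\mathcal{L}_c(M_t)$, so I would either assume gradients lie within the clipping ball of radius $S$ or absorb the clipping bias into the heterogeneity term. A secondary subtlety is tracking the exact constants through the aggregation so that the noise term carries no $1/C$ factor while the heterogeneity term does, and ensuring the uniform budget bound is applied consistently with the $\epsilon_{\max}$ appearing in the statement.
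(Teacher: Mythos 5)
Your proposal follows essentially the same route as the paper's proof: the $L$-smoothness descent inequality, decomposition of the aggregated update into true gradient plus heterogeneity plus zero-mean Gaussian noise, vanishing of the cross term in expectation, the $\varepsilon^2/C$ heterogeneity bound, the Polyak--Lojasiewicz step yielding the contraction factor $A = 1 - 2\mu\eta + \mu\eta^2 L$, and geometric unrolling with the $\epsilon_{\max}$ substitution for the noise variance. The obstacles you flag (the unhandled clipping bias and the constant-tracking through aggregation) are real, but the paper's own proof glosses over them in the same way, so your plan matches it in both substance and level of rigor.
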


\begin{proof}\label{proof}
First, the client $c$ locally trains the model as: $M_{t+1}^c = M_t - \eta \nabla \mathcal{L}_c(M_t)$. The aggregated model by the server after applying DP can be expressed as: $M_{t+1} = \sum_{c \in \mathcal{C}} p_c \tilde{M}_{t+1}^c = \sum_{c \in \mathcal{C}} p_c (M_{t+1}^c + n_{t+1}^c)$, where $n$ represents the noise, $p_c = \frac{|D_c|}{|D|}$ and $\sum_{c \in \mathcal{C}} p_c =1$. Then, we define $n_{t+1} \triangleq \sum_{c \in \mathcal{C}} p_c n_{t+1}^c$.
By using the second-order Taylor expansion, we have
\begin{equation}
    \begin{aligned}
        &\mathcal{L}(M_{t+1}) - \mathcal{L}(M_t) \\ 
        \leq& (M_{t+1} - M_t)^\top \nabla \mathcal{L}(M_t) + \frac{L}{2} \|M_{t+1} - M_t\|^2_2 \\
         \leq& \left(n_{t+1} - \eta \sum_{c \in \mathcal{C}} p_c \nabla \mathcal{L}_c(M_t)\right)^\top \nabla \mathcal{L}(M_t)\\
& + \frac{L}{2} \left\| \sum_{c \in \mathcal{C}} p_c (n_{t+1}^c - \eta \nabla \mathcal{L}_c(M_t)) \right\|^2_2\\
= &\frac{\eta^2 L}{2} \left\|\sum_{c \in \mathcal{C}} p_c \nabla \mathcal{L}_c(M_t) \right\|^2_2 + \frac{L}{2} \left\|\sum_{c \in \mathcal{C}} p_c n_{t+1}^c \right\|^2_2 \\
& - \eta \nabla \mathcal{L}(M_t)^\top \sum_{c \in \mathcal{C}} p_c \nabla \mathcal{L}_c(M_t).
    \end{aligned}
\end{equation}
Then, the expected objective function is expressed as:
\begin{equation}
    \begin{aligned}
    &\mathbb{E}\{\mathcal{L}(M_{t+1})\} \leq \mathcal{L}(M_t) - \eta \|\nabla \mathcal{L}(M_t)\|^2_2 \\
&+ \frac{\eta^2 L}{2} \mathbb{E}\{\|\sum_{c \in \mathcal{C}} p_c \nabla \mathcal{L}_c(M_t)\|^2_2 \} + \frac{L}{2} \mathbb{E}\{ \|n_{t+1}\|^2_2 \}.
    \end{aligned}
\end{equation}
With the assumption that $p_c=1/C$, the aggregated gradient is:
$\sum_{c \in \mathcal{C}} p_c \nabla \mathcal{L}_c(M_t) = \frac{1}{C} \sum_{c \in \mathcal{C}} \nabla \mathcal{L}_c(M_t)$.
Let $e_c = \nabla \mathcal{L}_c(M_t) - \nabla \mathcal{L}(M_t)$. Thus, according the Assumption \ref{assumption}, $\|\nabla \mathcal{L}_c(M_t) - \nabla \mathcal{L}(M_t)\|_2 = \|e_c\|_2 \leq \varepsilon_c$, which means $e_c$ is bounded by $\varepsilon_c$. Hence, we have $\sum_{c \in \mathcal{C}} p_c \nabla \mathcal{L}_c(M_t)  = \nabla \mathcal{L}(M_t) + \frac{1}{C} \sum_{c \in \mathcal{C}} e_c$. Then, we compute:
\begin{equation}
    \begin{aligned}
        &\mathbb{E}\left\{\left\|\sum_{c \in \mathcal{C}} p_c \nabla \mathcal{L}_c(M_t)\right\|^2_2\right\} \\
        &= \mathbb{E}\left\{ \left\|\nabla \mathcal{L}(M_t) + \frac{1}{C} \sum_{c \in \mathcal{C}} e_c\right\|^2_2 \right\} \\
        &= \mathbb{E}\left\{\|\nabla \mathcal{L}(M_t)\|^2_2 + \frac{2}{C} \nabla \mathcal{L}(M_t)^\top {\sum_{c \in \mathcal{C}} e_c}  + \left\|\frac{\sum_{c \in \mathcal{C}} e_c}{C} \right\|^2_2\right\}.
    \end{aligned}
\end{equation}
Assuming that the $e_c$ are zero-mean, we get $\mathbb{E}\left\{\frac{1}{C} \sum_{c \in \mathcal{C}} e_c\right\} = 0$ and $   \mathbb{E}\left\{\left\|\frac{1}{C} \sum_{c \in \mathcal{C}} e_c\right\|^2_2\right\} \leq \frac{\varepsilon^2}{C}$. Hence, we obtain:
\begin{equation}
    \mathbb{E}\left\{\left\|\sum_{c \in \mathcal{C}} p_c \nabla \mathcal{L}_c(M_t)\right\|^2_2\right\} \leq \|\nabla \mathcal{L}(M_t)\|^2_2 + \frac{\varepsilon^2}{C}.
\end{equation}
Then, we have:
\begin{equation}
    \begin{aligned}
        \mathbb{E}\{\mathcal{L}(M_{t+1})\} \leq & \mathcal{L}(M_t) - \left(\eta - \frac{\eta^2 L}{2}\right) \|\nabla \mathcal{L} (M_t)\|^2_2 \\
        &+ \frac{\eta^2 L}{2} \cdot \frac{\varepsilon^2}{C} + \frac{L}{2} \mathbb{E}\{\|n_{t+1}\|^2_2\}.
    \end{aligned}
\end{equation}
Considering Polyak-Lojasiewicz condition and applying recursively, we know:
\begin{equation}
    \begin{aligned}
    &\mathbb{E}\{\mathcal{L}(M_{t})\} - \mathcal{L}(M^*) \leq 
( 1 - 2 \mu \eta + \mu \eta^2 L)^t( \mathcal{L}(M_0) - \mathcal{L}(M^*)) \\
&+ \frac{L^2 [1 -  ( 1 - 2 \mu \eta + \mu \eta^2 L)^t]}{2 \mu}(\mathbb{E}\{\|n\|^2_2\} + \frac{\eta^2 \varepsilon^2}{C}).
    \end{aligned}
\end{equation}
According to the adaptive DP strategy, we obtain:
\begin{equation}
    \begin{aligned}
        &\mathbb{E}\{\mathcal{L}(M_{T})\} - \mathcal{L}(M^*) \leq 
A^T( \mathcal{L}(M_0) - \mathcal{L}(M^*)) \\
&+ L^2 (1 -  A^T)( \frac{S^2 {\ln(1.25/\delta)}}{\mu \epsilon_{\max}^2} + \frac{\eta^2 \varepsilon^2}{2 \mu C}),
    \end{aligned}
\end{equation}
where $A=  1 - 2 \mu \eta + \mu \eta^2 L$.
\end{proof}
Theorem \ref{theo} establishes a crucial relationship between convergence and privacy in the proposed algorithm. It describes the upper bound on the difference between the model loss $\mathcal{L}(M_T)$ after $T$ rounds and the optimal loss $\mathcal{L}(M^*)$, while ensuring $(\epsilon, \delta)$-DP for all clients. The theorem highlights that the convergence rate of the model is governed by the factor $A$ (where $A < 1$). Additionally, the privacy constraints, represented by $(\epsilon, \delta)$-DP, introduce a trade-off via the noise term. This term illustrates that stricter privacy (lower $\epsilon_{\max}$) leads to slower convergence due to increased noise, emphasizing the need to balance privacy guarantees with efficient convergence.

        



\section{Evaluation}\label{sec:evaluation}
\subsection{Evaluation Setup}
\subsubsection{Implementation Details}
We simulate a FL environment comprising 20 clients, where we employ FedAvg~\cite{mcmahan2017communication} for aggregation. Specifically, we set the local training epochs to 5 and the total global training epochs to 40. Additionally, the learning rate is set to 0.005, the batch size is 64, and the percentage of loss reduction for temporary storage is set to 10\%.
In the FedADP setup, we configure the privacy budget to 3.0, set the global model selection ratio to 60\%, and the model update selection ratio to 70\%. 

\subsubsection{Compared Methods} 
We evaluate FedADP by comparing it with two unlearning strategies: (i) Train-from-scratch (Retrain), which retrains the model from scratch excluding the target clients who request unlearning; (ii) FedRecover~\cite{cao2023fedrecover}, which the server applies the L-BFGS algorithm to realize unlearning (to focus on L-BFGS recovery performance, we did not apply the periodic correction); and (iii) FedADP without DP, which adopts the same selection strategy and unlearning method as FedADP but does not apply differential privacy.

\subsubsection{Evaluation Metrics} 
To assess unlearning effectiveness and privacy preservation, we use \textbf{test accuracy (TA)} and two attacks scenarios. One is the membership inference attack (MIA), introduced by Shokri \textit{et al.} \cite{shokri2017membership}, which aims to determine whether a specific record was used in training a target model. We use the \textbf{membership inference success rate (MISR)} to evaluate the unlearning effectiveness. Specifically, we label training data from target clients as class 1 and data from the test set as class 0. We use the logits output from the compromised global model to train a binary classification model, known as the shadow model. 
After unlearning, we re-input the target clients' training data and obtain the logits output, which is then assessed by the shadow model. The goal is for the results from these target clients to predominantly be classified as 0, indicating failed membership inference and successful removal of their influence from the global model. Since test data is also included, an MISR close to 0.5 indicates better unlearning effectiveness.
Another attack is the backdoor attack (BA) \cite{chen2017targeted,saha2020hidden,bagdasaryan2020backdoor}, which involves injecting malicious triggers into the model during training, causing the model to behave in a targeted way when exposed to specific inputs. 
\textbf{Attack success rate (ASR)} is used to evaluate the performance of backdoor unlearning. We input training data from target clients into the unlearned global model and calculate ASR as the proportion of predictions classified as the target label in a dataset containing the backdoor trigger. A lower ASR indicates a more effective unlearning, as it suggests the model is better at mitigating the backdoor influence.

\begin{figure}[t]
    \centering
    \includegraphics[width=\linewidth]{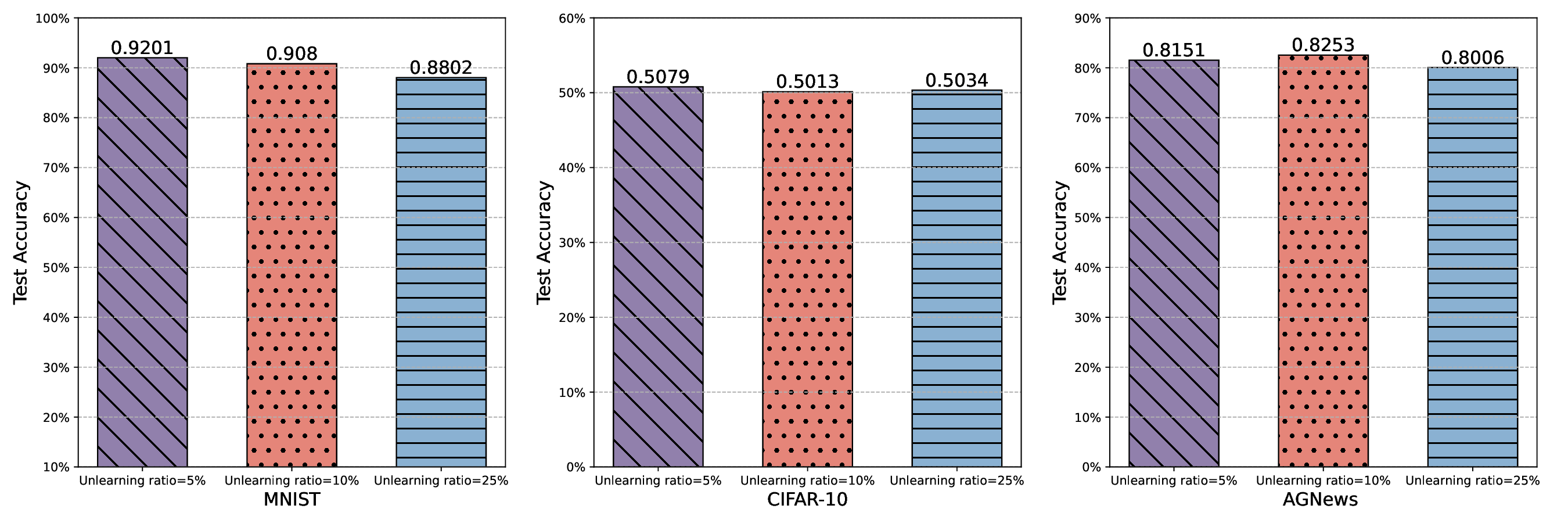}
    \caption{Test accuracy of unlearning ratio of 5\%, 10\% and 25\% on three different datasets of MNIST, CIFAR-10 and AGNews.}
    \label{fig:accuracy_unlearning_ratio}
\end{figure}

\begin{figure}[t]
    \centering
    \includegraphics[width=\linewidth]{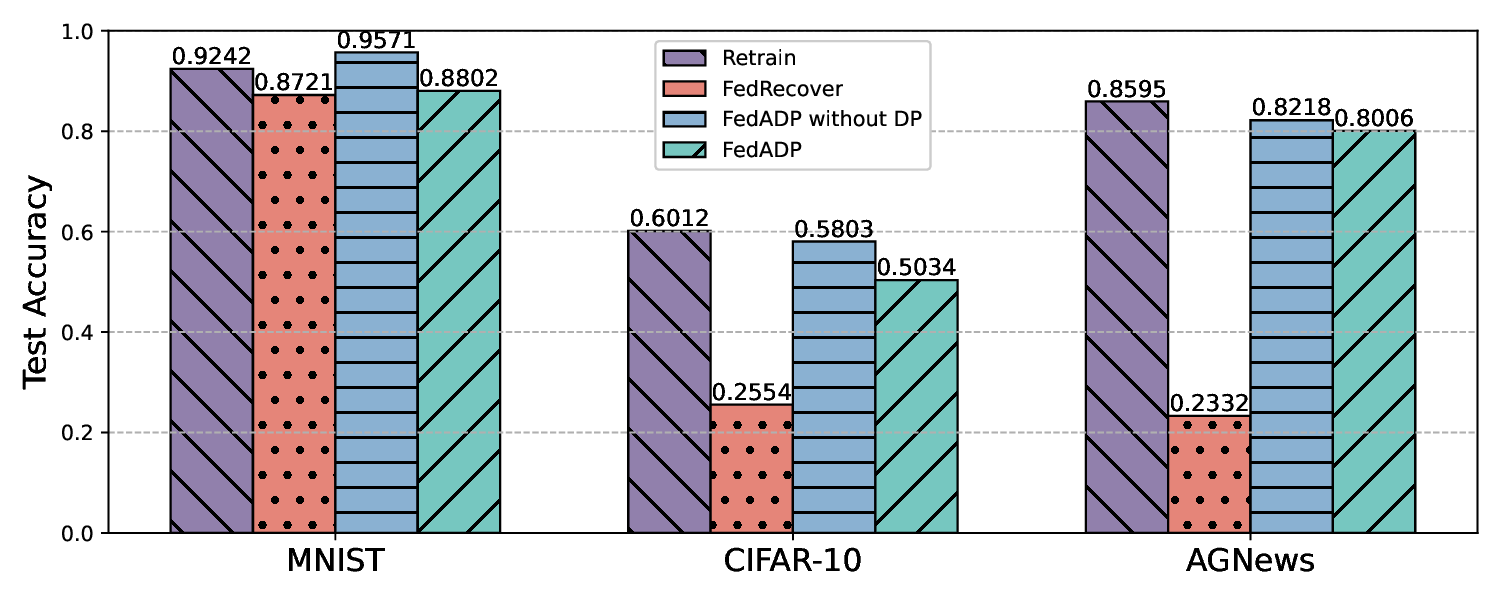}
    \caption{Test accuracy for different methods of Retrain, FedRecover, FedADP without DP and FedADP on dataset MNIST, CIFAR-10 and AGNews.}
    \label{fig:accuracy_methods}
\end{figure}

\begin{figure*}[t]
    \centering
    \includegraphics[width=\linewidth]{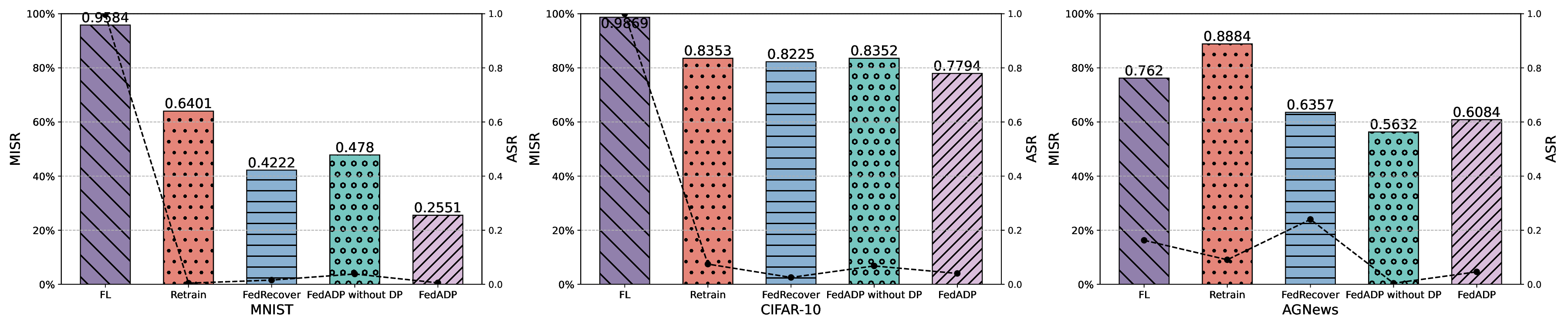}
    \caption{MISR and ASR for different methods of Retrain, FedRecover, FedADP without DP and FedADP on datasets MNIST, CIFAR-10 and AGNews.}
    \label{fig:attacks}
\end{figure*}

\begin{figure}[t]
    \centering
    \includegraphics[width=\linewidth]{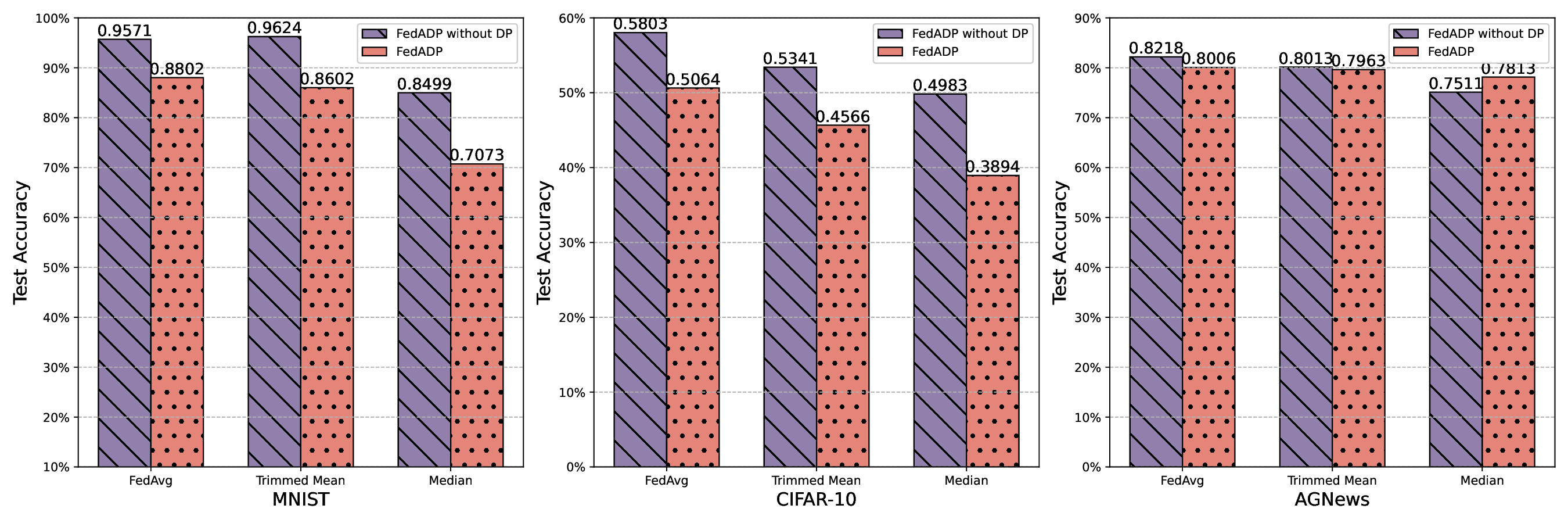}
    \caption{Test accuracy under different secure aggregation methods of FedAvg, Trimmed Mean and Median for different methods of FedADP without DP and FedADP on three different datasets of MNIST, CIFAR-10 and AGNews.}
    \label{fig:aggregation}
\end{figure}

\begin{figure*}[t]
	\centering
	\subfigure[Impact of the model selection ratio]{
		\begin{minipage}{0.31\textwidth} 
            \includegraphics[width=\textwidth]{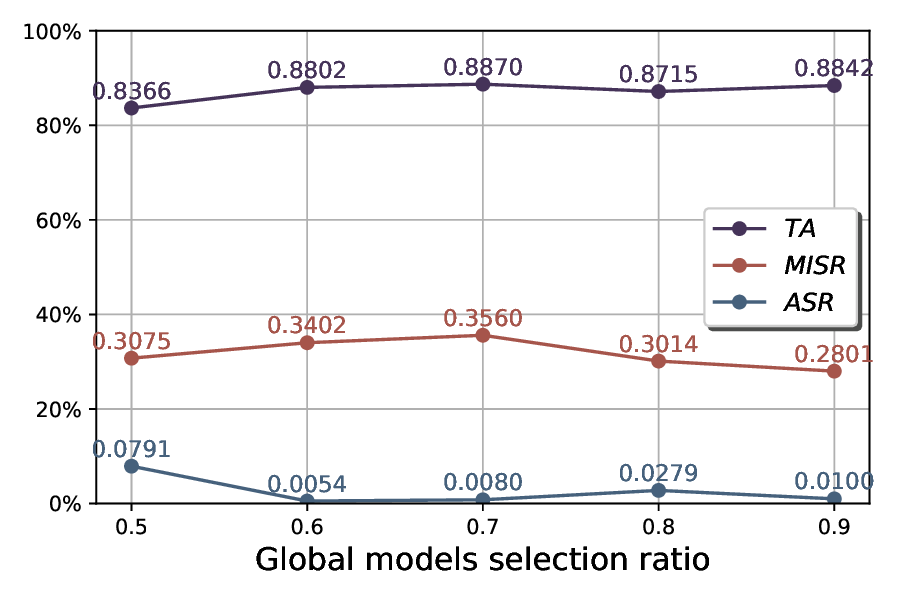} \\
            \label{fig:impact of rounds}
		\end{minipage}
	}
	\subfigure[Impact of the update selection ratio]{
		\begin{minipage}{0.31\textwidth}
			\includegraphics[width=\textwidth]{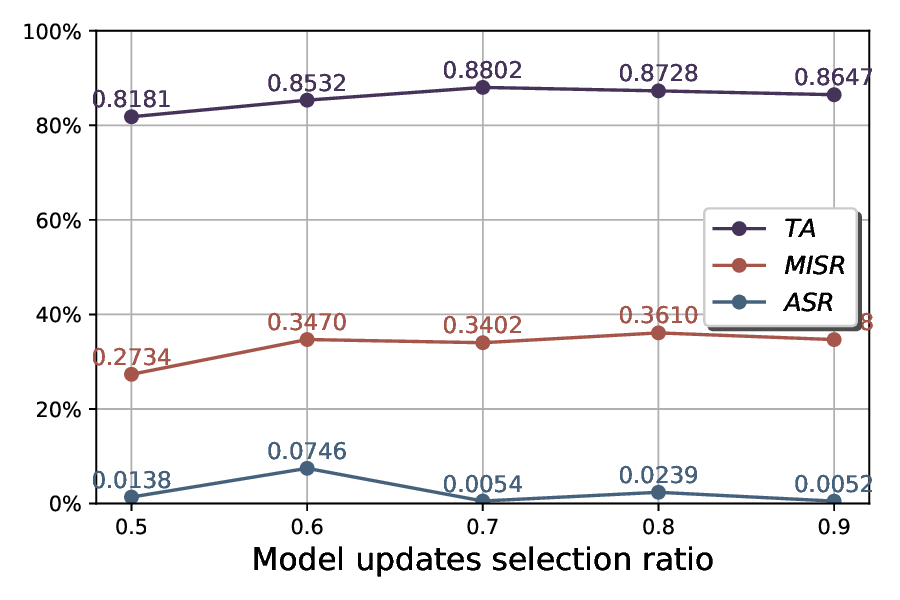}\\
			\label{fig:impact of clients}
		\end{minipage}
	}
 	\subfigure[Impact of the privacy budget]{
		\begin{minipage}{0.31\textwidth}
			\includegraphics[width=\textwidth]{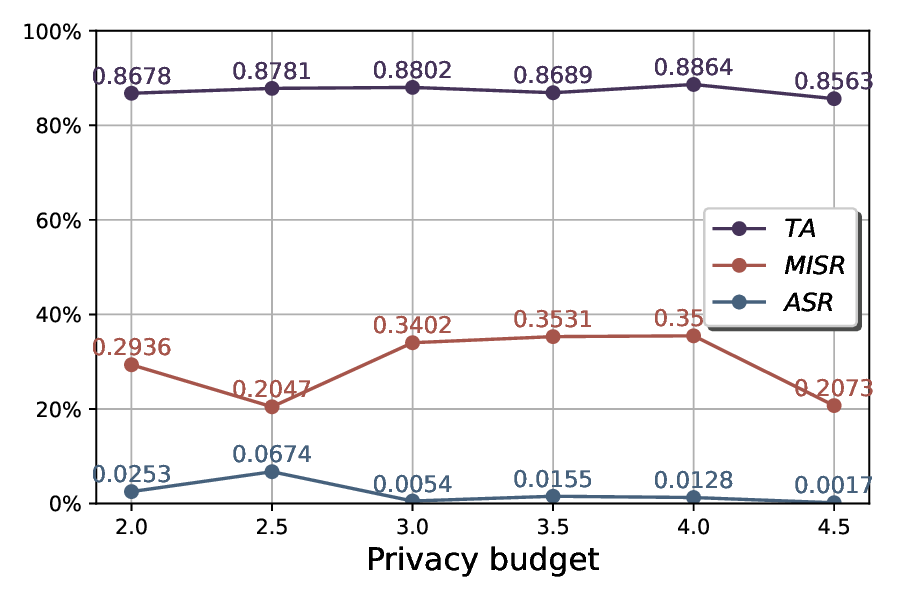}\\
			\label{fig:impact of privacy}
		\end{minipage}
	}
	\caption{Impact of the global model selection ratio, model update selection ratio and privacy budget on MNIST}
    \label{fig:ratio}
\end{figure*}

\subsubsection{Datasets and Models} 

We utilize various datasets and models for classification tasks.
\textbf{MNIST} \cite{deng2012mnist} is used for handwritten digit classification, containing 60,000 training and 10,000 testing images, each $28 \times 28$ pixels. The CNN model includes two convolutional layers with ReLU and max pooling, followed by two fully connected layers transitioning from 1024 to 512 dimensions, and ending with 10 classes.
\textbf{CIFAR-10} \cite{krizhevsky2009learning} features 50,000 training and 10,000 testing images, each $32 \times 32$ pixels, across 10 object and animal classes. The CNN model adapts to three input channels and includes fully connected layers with a dimension of 1600.
\textbf{AGNews} \cite{zhang2015character} comprises 120,000 news articles categorized into World, Sports, Business, and Science/Technology. The TextCNN model uses an embedding layer followed by three 1D convolutional layers with kernel sizes of 3, 4, and 5, along with ReLU activations, max pooling, dropout, and a fully connected layer for class predictions.

\subsection{Evaluation Results and Analysis}

\subsubsection{Storage and Communication Cost}
In this section, we analyze the storage and communication costs associated with each method, as outlined in Table 1. In FL, we assume a total of $T$ training rounds, $C$ clients, and $X$ parameters in the local or global models. Train-from-scratch incurs no storage costs but requires $O(T)$ in communication costs. In contrast, FedRecover demands substantial storage, amounting to $O(CXT)$, and involves communication costs of $O(T_w + T_f + \lfloor \frac{T - T_w - T_f}{T_c} \rfloor)$, where $T_w$ represents the warm-up rounds, $T_c$ the periodic correction rounds, and $T_f$ the final tuning rounds. Our proposed method, FedADP, which incorporates a dual-layer selection strategy involving both global models and model updates, presents a distinct cost profile. Specifically, the storage cost of FedADP is $O(\lambda \gamma CXT)$, while the communication cost is $O(\lambda T)$, where $\lambda < 1$ and $\gamma < 1$. As a result, FedADP offers reductions in both storage and communication costs compared to existing methods.


\begin{table}[!t]
    \centering
    \caption{Comparison of storage and computation cost among methods}
    \begin{tabular}{lll}
     \toprule
         & Storage cost & Computation cost\\
    \midrule
     Retrain    & - & $O(T)$\\
     FedRecover & $O(CXT)$  & $O(T_w+T_f+\lfloor \frac{(T-T_w-T_f)}{T_c} \rfloor$)  \\
     FedADP  & \textbf{$O( \lambda \gamma CXT)$} & \textbf{$O(\lambda T)$}\\
     \bottomrule
    \end{tabular}  
    \label{tab:storage_communication}
\end{table}
\subsubsection{Accuracy on Test Data}
We first analyze the accuracy of FedADP for different unlearning ratios (5\%, 10\%, and 25\%) across three datasets: MNIST, CIFAR-10, and AGNews, as shown in \cref{fig:accuracy_unlearning_ratio}. For the MNIST dataset, although increasing the unlearning ratio (i.e., a higher number of users requesting unlearning) results in a slight decrease in accuracy, it generally remains around 90\%. Given that CIFAR-10 is a larger dataset, the accuracy is somewhat lower compared to MNIST. For the textual dataset AGNews, FedADP achieves an accuracy of over 80\% across different unlearning ratios. This demonstrates FedADP's effectiveness for both image and text datasets.


Additionally, we compare the performance of different methods at an unlearning ratio of 25\% across various datasets, as depicted in \cref{fig:accuracy_methods}. On the MNIST dataset, all four methods achieve accuracy above 87\%, demonstrating robust performance. Among them, FedADP without DP achieves the highest accuracy at 95.71\%, indicating that our selection strategy and calibration method result in superior unlearning performance compared to both Train-from-scratch and FedRecover. FedADP achieves an accuracy of 88.02\%, which is slightly lower due to the added Gaussian noise for privacy protection. While this results in a minor trade-off in accuracy, it ensures client privacy and remains within an acceptable range. Moreover, FedADP's accuracy is higher than that of FedRecover, further demonstrating the effectiveness of FedADP.
On the CIFAR-10 dataset, all methods show a decrease in accuracy, with FedRecover exhibiting the lowest accuracy, highlighting its limitations in handling unlearning for large datasets. 
For the AGNews textual dataset, FedADP without DP and FedADP still show strong unlearning performance.


\subsubsection{Performance after MIA and BA}
We analyze the performance of different methods after MIA and BA on various datasets, as illustrated in \cref{fig:attacks}. MISR and ASR can assess the efficacy of unlearning across different methods. On the MNIST dataset, the MISR is close to 100\% during the FL phase. After applying unlearning methods, all MISR values fall below 50\%, indicating effective unlearning. Notably, FedADP achieves a MISR of 25.51\%, which is significantly lower than the MISR of 47.8\% for FedADP without DP. This demonstrates that incorporating DP not only protects client privacy but also ensures that clients who requested unlearning cannot be identified, validating the effectiveness of FedADP in unlearning.
Similarly, for ASR, the value is also close to 100\% during FL. After applying the four unlearning methods, ASR drops to below 10\%, indicating that the backdoor attack is ineffective against all four methods. Besides, FedADP achieves an exceptionally low ASR of 0.54\%. On the CIFAR-10 dataset, FedADP also exhibits similar performance. For the AGNews text dataset, both FedADP without DP and FedADP outperform existing methods, such as Retrain and FedRecover, in terms of MISR and ASR.


\subsubsection{Generalization across Different Aggregation Methods}
We analyze the test accuracy of FedADP without DP and FedADP with various aggregation methods, such as FedAvg, trimmed mean, and median, across different datasets, as shown in \cref{fig:aggregation}. On the MNIST dataset, FedADP without DP consistently achieves strong unlearning performance across all three aggregation methods, with accuracy exceeding 95\% in both FedAvg and trimmed mean. Although FedADP with DP exhibits slightly lower accuracy due to the added noise, the decrease is not substantial. 
This indicates that our designed selection strategy and calibration method for unlearning maintain excellent performance across different aggregation methods. Despite the drop in effectiveness with DP, the method still demonstrates robust generalization across various aggregation techniques. These findings also apply to the CIFAR-10 and AGNews datasets.

\subsubsection{Discussion}

To analyze the impact of the selection strategy on performance, we conduct two experiments on MNIST focusing on its two main hyperparameters: global model selection ratio $\lambda$ and model update selection ratio $\gamma$. Besides, we adopt adaptive differential privacy, enabling the server to dynamically allocate the privacy budget to clients. This allocation process is guided by a key hyperparameter, maximum privacy budget $\epsilon_{\max}$. In the following experiment, we examine the effects of varying privacy budgets.

\textbf{Impact of the global model selection ratio $\lambda$.}
We selectively store global models during FL to calibrate the subsequent unlearning process. This part explores the impact of varying the global model selection ratio from 0.5 to 0.9 on three key metrics: accuracy, MISR, and ASR, as illustrated in the \cref{fig:impact of rounds}. A higher selection ratio indicates that more global models are stored. 
Our findings reveal that increasing the selection ratio from 0.5 to 0.6 leads to a significant improvement in test accuracy, which then stabilizes at around 88\% as the ratio increases from 0.6 to 0.9. Additionally, the MISR remains consistently below 50\%, indicating a robust unlearning process that effectively minimizes the risk of membership inference attacks. The ASR stays below 10\% across all selection ratios, demonstrating that our approach successfully mitigates backdoor influence. 
Based on these observations, we conclude that a global model selection ratio of 0.6 offers an optimal balance between maintaining model accuracy and ensuring effective attack resilience.

\textbf{Impact of the model update selection ratio $\gamma$.}
Similarly, we discuss the impact of the model update selection ratio from 0.5 to 0.9 on three key metrics, shown in the \cref{fig:impact of clients}. As the update selection ratio increases from 0.5 to 0.7, test accuracy improves noticeably, rising from around 81.81\% to 88.02\%. This suggests that a higher selection ratio within this range effectively enhances model performance by storing more relevant updates. However, beyond a 0.7 selection ratio, accuracy begins to plateau. Meanwhile, the MISR remains relatively stable at approximately 34\%, implying that the model’s resistance to membership inference attacks does not change significantly with different selection ratios. ASR remains low, below 10\%, across all selection ratios, indicating that the unlearning process effectively neutralizes backdoor attacks. 
Hence, we choose 0.7 as the update selection ratio, considering the storage cost.

\textbf{Impact of the privacy budget $\epsilon_{\max}$.}
We examine the impact of varying maximum privacy budgets on model performance during the unlearning process, as shown in the \cref{fig:impact of privacy}. An increase in the privacy budget results in a reduction in the amount of noise added, which reduces privacy preservation. Despite this, the test accuracy at a privacy budget of 3.0 reaches a notable high of 88\%, reflecting strong model performance. Furthermore, when assessing both MISR and ASR across privacy budgets ranging from 2.0 to 4.5, our method exhibits substantial resilience against membership inference attacks and backdoor attacks, demonstrating effective unlearning. Thus, taking into account both privacy preservation and unlearning effectiveness, a privacy budget of 3.0 represents an optimal balance.

\section{Conclusion}\label{sec:conclusions}

In summary, we propose FedADP, an effective and privacy-preserving federated unlearning method that leverages historical information and DP. We implement adaptive DP tailored for FU, which significantly enhances privacy protection. By selecting specific global models and updates for the unlearning process, combined with a novel calibration method, FedADP achieves both efficient and effective unlearning. We theoretically analyze the impact of privacy on convergence and provide an upper bound on convergence while ensuring \((\epsilon, \delta)\)-DP for all clients. Additionally, our experiments analyze the storage and communication cost savings achieved by FedADP, and demonstrate its unlearning effectiveness through metrics such as test accuracy, MISR, and ASR. Furthermore, we explore the impact of the selection strategy and privacy budget on unlearning performance, revealing key trade-offs between privacy protection, storage efficiency, and accuracy. Overall, FedADP provides robust privacy preservation while enabling efficient and accurate unlearning, thereby upholding the ``right to be forgotten" for clients.

\section*{Acknowledgement}
This research / project is supported by the National Research Foundation, Singapore and Infocomm Media Development Authority under its Trust Tech Funding Initiative and the Singapore Ministry of Education Academic Research Fund (RG91/22 and NTU startup). Any opinions, findings and conclusions or recommendations expressed in this material are those of the author(s) and do not reflect the views of National Research Foundation, Singapore and Infocomm Media Development Authority.
\bibliographystyle{IEEEtran}
\bibliography{mybibliography}

\end{document}